\documentclass[intlimits,tbtags]{article}
\usepackage[utf8]{inputenc}
\setlength{\parskip}{\medskipamount}
\setlength{\parindent}{0pt}
\usepackage{mathtools}

\makeatletter

\usepackage{amsthm}
\usepackage{mathtools}
\usepackage{amsfonts}
\newcommand*{\id}{{\normalfont\hbox{1\kern-0.4em1}}}
\newcommand{\hide}[1]{}



\newcommand*{\restricted}[1]{{\mid_{#1}}}

\newcommand*{\curl}{\operatorname{curl}}
\newcommand*{\grad}{\operatorname{grad}}

\newcommand*{\dive}{\operatorname{div}}
\newcommand*{\Grad}{\operatorname{Grad}}

\newcommand*{\Div}{\operatorname{Div}}

\newcommand*{\ii}{\mathrm{i}}





\DeclareMathAccent{\Circ}{\mathalpha}{operators}{"17}
\newcommand{\interior}[1]{\Circ{#1}}

\renewcommand{\Re}{\operatorname{\mathfrak{Re}}}

\newcommand{\oi}[2]{\left]#1,#2 \right[}

\newcommand{\lci}[2]{\left[#1,#2 \right[}
\newcommand{\rci}[2]{\left]#1,#2 \right]}


\renewcommand{\tilde}{\widetilde}
\renewcommand*{\epsilon}{\varepsilon}
\renewcommand*{\rho}{\varrho}
\theoremstyle{plain}
\newtheorem{thm}{Theorem}[section]
  \theoremstyle{definition}
  
  \theoremstyle{remark}
  \newtheorem{rem}[thm]{Remark}
 \theoremstyle{definition}
  
  \theoremstyle{plain}
  
  \theoremstyle{plain}
  \newtheorem{lem}[thm]{Lemma}
  \theoremstyle{plain}
  \newtheorem*{lem*}{Lemma}
  \theoremstyle{plain}
  \newtheorem*{thm*}{Theorem}
  \theoremstyle{plain}
  
  \theoremstyle{remark}

\makeatother

\begin{document}
  \author{Rainer Picard}   
\title{On an Electro-Magneto-Elasto-Dynamic Transmission Problem.}   
\maketitle \abstract{We consider a coupled system of Maxwell's equations and the equations of elasticity, where the coupling occurs not via material properties but through an  interaction on an interface separating the two regimes. Evolutionary well-posedness in the sense of Hadamard well-posedness supplemented by causal dependence is shown for a natural choice of generalized interface conditions. The results are obtained in a Hilbert space setting incurring no regularity constraints on the boundary and the interface of the underlying regions.}   

\section{Introduction}

Similarities between various initial boundary value problems of mathematical
physics have been noted as general observations throughout the literature.
Indeed, the work by K. O. Friedrichs, \cite{Ref166,CPA:CPA3160110306},
already showed that the classical linear phenomena of mathematical
physics belong – in the static case – to his class of \emph{symmetric
positive hyperbolic partial differential equations}, later referred
to as \emph{Friedrichs systems}, which are of the abstract form
\begin{equation}
\left(M_{1}+A\right)u=f,\label{eq:pssym}
\end{equation}
with $A$ at least formally, i.e. on $C_{\infty}$-vector fields with
compact support in the underlying region $\Omega$, a skew-symmetric
differential operator and the $L^{\infty}$-matrix-valued multiplication-operator
$M_{1}$ satisfying the condition 
\[
\mathrm{sym}\left(M_{1}\right)\coloneqq\frac{1}{2}\left(M_{1}+M_{1}^{*}\right)\geq c>0
\]
for some real number $c$. Indeed, a typical choice for the domain
of $A$ is to incorporate a boundary condition into $D\left(A\right)$,
so that $A$ is skew-selfadjoint ($A$ quasi-m-accretive would be
sufficient). Problem \eqref{eq:pssym} can be considered as the static
problem associated with the dynamic problem ($\partial_{0}$ denotes
the time-derivative) 
\begin{equation}
\partial_{0}M_{0}+M_{1}+A\label{eq:FS}
\end{equation}
with $M_{0}$ a selfadjoint $L^{\infty}$-multiplication-operator
and $M_{0}\geq0$, which were also addressed in \cite{CPA:CPA3160110306}.
It is noteworthy, that even the temporal exponential weight factor,
which plays a central role in our approach, is introduced as an ad-hoc
formal trick to produce a suitable $M_{1}$ for a well-posed static
problem. For the so-called time-harmonic case, where $\partial_{0}$
is replaced by $\ii\omega$, $\omega\in\mathbb{R}$, we replace $A$
simply by $\ii\omega M_{0}+A$ to arrive at a system of the form \eqref{eq:pssym}.

Operators of the Friedrichs type \eqref{eq:FS}, can be generalized
to obtain a fully time-dependent theory allowing for operator-valued
coefficients, indeed, in the time-shift invariant case, for systems
of the general form 
\begin{equation}
\tag{{Evo-Sys}}\left(\partial_{0}M\left(\partial_{0}^{-1}\right)+A\right)U=F\label{eq:problem-2}
\end{equation}
where $A$ is – for simplicity – skew-selfadjoint and $M$ an operator-valued
– say – rational function as an abstract coefficient. The meaning
of $M\left(\partial_{0}^{-1}\right)$ is in terms of a suitable function
calculus associated with the (normal) operator $\partial_{0}$, \cite[Chapter 6]{PDE_DeGruyter}.
We shall refer to such systems as evolutionary equations, \emph{evo-systems}
for short, to distinguish them from the special subclass of classical
(explicit) evolution equations.

In this paper we intend to study a particular transmission problem
between two physical regimes, elec\-tro-mag\-ne\-to-dy\-na\-mics
and elasto-dynamics, within this general framework and establish its
well-posedness, which for evo-systems entails not only Hadamard well-posedness,
i.e. \emph{uniqueness}, \emph{existence} and \emph{continuous dependence},
but also the crucial property of \emph{causality}. 

The peculiarity of the problem we shall investigate is that the interaction
between the two regimes is solely via the interface, not via material
interactions as in piezo-electrics, compare e.g. \cite{MMA:MMA3866}
for the latter type of effects.

After properly introducing evo-systems in the next section, we shall
establish the equations of electro-magneto-dynamics and elasto-dynamics
respectively, as such systems in Section \ref{sec:Maxwell's-Equations-and-Elastodynamics}.
Finally, in Section \ref{sec:An-Interface-Coupling} we establish
a particular interface coupling problem between the two regimes in
adjacent regions via a mother-descendant mechanism, see the survey
\cite{Picard2015}. We emphasize that our setup allows for arbitrary
open sets as underlying domains with no additional constraints on
boundary regularity.

\section{A Short Introduction to a Class of Evo-Systems}

\subsection{Basic Ideas}

We shall approach solving \eqref{eq:problem-2} by looking at the
equation as a space-time operator equation in a suitable Hilbert space
setting. Without loss of generality we may\footnote{Every complex Hilbert space $X$ is a real Hilbert space choosing
only real numbers as multipliers and 
\[
\left(\phi,\psi\right)\mapsto\Re\left\langle \phi|\psi\right\rangle _{X}
\]
as new inner product. Note that with this choice $\phi$ and $\ii\phi$
are always orthogonal. Moreover, for any skew-symmetric operator $A$
we have
\[
x\perp Ax
\]
for all $x\in D\left(A\right)$. 

Indeed, since $\left\langle x|y\right\rangle -\left\langle y|x\right\rangle =0$
(symmetry) we have
\[
\left\langle x|Ax\right\rangle -\left\langle Ax|x\right\rangle =0
\]
or by skew-symmetry
\begin{eqnarray*}
0 & = & \left\langle x|Ax\right\rangle -\left\langle Ax|x\right\rangle \\
 & = & 2\left\langle x|Ax\right\rangle 
\end{eqnarray*}
for all $x\in D\left(A\right)$. } and will assume that all Hilbert spaces are real.

Solutions will be discussed in a weighted $L^{2}$-space $H_{\nu}\left(\mathbb{R},H\right)$,
constructed by completion of the space $\interior C_{1}\left(\mathbb{R},H\right)$
of differentiable $H$-valued functions with compact support w.r.t.
$\left\langle \:\cdot\:|\:\cdot\:\right\rangle _{\nu,H}$ (norm: $\left|\:\cdot\:\right|_{\nu,H}$)
\[
\left(\varphi,\psi\right)\mapsto\int_{\mathbb{R}}\left\langle \varphi\left(t\right)|\:\psi\left(t\right)\right\rangle _{H}\:\exp\left(-2\nu t\right)dt.
\]

Here $H$ denotes a generic real Hilbert space. We introduce time-dif\-fe\-ren\-tia\-tion
$\partial_{0}$ as a closed operator in $H_{\nu}\left(\mathbb{R},H\right)$
defined as the closure of 
\begin{align*}
\interior C_{1}\left(\mathbb{R},H\right)\subseteq H_{\nu}\left(\mathbb{R},H\right) & \to H_{\nu}\left(\mathbb{R},H\right),\\
\varphi & \mapsto\varphi^{\prime}.
\end{align*}
The operator $\partial_{0}$ is normal in $H_{\nu}\left(\mathbb{R},H\right)$.
For $\nu_{0}\in\oi0\infty$, $\nu\in\oi{\nu_{0}}\infty$, we have
\begin{equation}
\mathrm{sym}\left(\partial_{0}\right)\coloneqq\overline{\frac{1}{2}\left(\partial_{0}+\partial_{0}^{*}\right)}=\nu\geq\nu_{0}>0,\label{eq:symdef}
\end{equation}
i.e. 
\[
\partial_{0}\mbox{ is a strictly (and uniformly w.r.t. \ensuremath{\nu\in\oi{\nu_{0}}\infty}) positive definite (i.e. m-accretive) operator. }
\]

This core observation can be lifted to a larger class of more complex
problems involving operator-valued coefficients and systems of the
general form 
\begin{equation}
\tag{{Evo-Sys}}\left(\partial_{0}M\left(\partial_{0}^{-1}\right)+A\right)U=F\label{eq:problem}
\end{equation}
where $A$ is – for simplicity – skew-selfadjoint and $M$ an operator-valued
– say – rational function as abstract coefficient. 

In many practical cases skew-selfadjointness of $A$ is evident from
its structure as a block operator matrix of the form
\[
A=\left(\begin{array}{cc}
0 & -C^{*}\\
C & 0
\end{array}\right),
\]
with $H=H_{0}\oplus H_{1}$ and $C:D\left(C\right)\subseteq H_{0}\to H_{1}$
a densely defined, closed linear operator.

\subsection{Well-Posedness for Evo-Systems.}

Since reasonable well-posedness requires closed operators we describe
our problem class more rigorously as of the form 
\begin{equation}
\tag{{Evo-Sys}}\overline{\left(\partial_{0}M\left(\partial_{0}^{-1}\right)+A\right)}U=F.\label{eq:problem-1}
\end{equation}
For a convenient special class, more than sufficient for our purposes
here, we record the following general well-posedness result, see \cite{Pi2009-1,PIC_2010:1889,Picard2015}.

\begin{thm}\label{thm:well}Let $z\mapsto M\left(z\right)$ be a
rational $\mathcal{L}\left(H,H\right)$-valued function in a neighborhood
of $0$ such that $M\left(0\right)$ is selfadjoint and\footnote{Here we use $\mathrm{sym}$ in an analogous meaning to \eqref{eq:symdef},
i.e.
\[
\mathrm{sym}\left(B\right)\coloneqq\frac{1}{2}\overline{\left(B+B^{*}\right)},
\]
which is equal to $\frac{1}{2}\left(B+B^{*}\right)$ since $B$ is
continuous.\label{fn:sym}}
\begin{equation}
\nu M\left(0\right)+\mathrm{sym}\left(M^{\prime}\left(0\right)\right)\geq\eta_{0}>0\label{eq:posdef}
\end{equation}
for some $\eta_{0}\in\mathbb{R}$ and all $\nu\in\oi{\nu_{0}}\infty$~,
$\nu_{0}\in\oi0\infty$ sufficiently large, and let $A$ be skew-selfadjoint.
Then well-posedness of \eqref{eq:problem} follows for all $\nu\in\oi{\nu_{0}}\infty$.
Moreover, the solution operator $\overline{\left(\partial_{0}M\left(\partial_{0}^{-1}\right)+A\right)}^{-1}$
is causal in the sense that 
\[
\chi_{_{\rci{-\infty}0}}\overline{\left(\partial_{0}M\left(\partial_{0}^{-1}\right)+A\right)}^{-1}=\chi_{_{\rci{-\infty}0}}\overline{\left(\partial_{0}M\left(\partial_{0}^{-1}\right)+A\right)}^{-1}\chi_{_{\rci{-\infty}0}}.
\]
\end{thm}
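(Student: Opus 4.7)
The strategy is to reduce the theorem to a strict m-accretivity estimate for $\partial_{0}M(\partial_{0}^{-1})+A$ on a common core in $H_{\nu}(\mathbb{R},H)$, from which Hadamard well-posedness and causality follow by standard arguments. First I would give precise meaning to $M(\partial_{0}^{-1})$: since $\partial_{0}$ is normal with $\mathrm{sym}(\partial_{0})=\nu$, its spectrum sits on the line $\nu+\ii\mathbb{R}$ and $\partial_{0}^{-1}$ is bounded with norm $\leq 1/\nu$. Choosing $\nu_{0}$ so large that $1/\nu_{0}$ lies strictly inside the disc of analyticity of the rational function $M$ around $0$, the bounded operator $M(\partial_{0}^{-1})$ is defined via the Dunford--Taylor (equivalently, normal) functional calculus. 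Writing
\[
M(z)=M(0)+M'(0)\,z+z^{2}\tilde{M}(z)
\]
with $\tilde{M}$ rational and bounded near $0$, this yields on a dense core the identity
\[
\partial_{0}M(\partial_{0}^{-1})=M(0)\,\partial_{0}+M'(0)+\partial_{0}^{-1}\tilde{M}(\partial_{0}^{-1}).
\]

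Next I would compute the symmetric part. Since $M(0)$ is selfadjoint and, via the functional calculus, commutes with $\partial_{0}$, one has $\mathrm{sym}(M(0)\,\partial_{0})=\nu M(0)$, while the remainder $\partial_{0}^{-1}\tilde{M}(\partial_{0}^{-1})$ is bounded in operator norm by $C/\nu$ uniformly for $\nu>\nu_{0}$. Combining this with the assumption \eqref{eq:posdef} and enlarging $\nu_{0}$ if necessary yields
\[
\mathrm{sym}\bigl(\partial_{0}M(\partial_{0}^{-1})\bigr)\geq \nu M(0)+\mathrm{sym}(M'(0))-\frac{C}{\nu}\,\identity\geq \frac{\eta_{0}}{2}\,\identity.
\]
Since skew-selfadjointness of $A$ gives $\mathrm{sym}(A)=0$, the sum satisfies $\mathrm{sym}(\partial_{0}M(\partial_{0}^{-1})+A)\geq(\eta_{0}/2)\,\identity$ on $D(\partial_{0})\cap D(A)$. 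This strict accretivity produces the a priori bound $|U|_{\nu,H}\leq (2/\eta_{0})|F|_{\nu,H}$. Running the analogous argument for the formal adjoint (whose symmetric part differs only by the sign of the skew part $A$ and hence satisfies the same positivity) shows that the range of the closure is dense, so $\overline{\partial_{0}M(\partial_{0}^{-1})+A}$ is boundedly invertible, giving uniqueness, existence and continuous dependence.

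For causality I would use the standard cutoff argument: $\partial_{0}^{-1}$ is the causal time-integration operator, hence $M(\partial_{0}^{-1})$ is causal as well. Testing the solution identity against $\chi_{_{\rci{-\infty}0}}U$ and invoking the positivity estimate on the symmetric part shows that if $F$ is supported in $\rci{-\infty}0$, then so is $U$, which is the stated causality property. The step I expect to be the main obstacle is the passage to the closure: one must verify that the formal symmetric-part computation performed on the core survives for $\overline{\partial_{0}M(\partial_{0}^{-1})+A}$ and its adjoint, so that both are strictly accretive on their full domains. This is where the normality of $\partial_{0}$ combined with the rational (hence analytic) structure of $M$ is decisive, since it ensures that the same spectral-theoretic machinery applies to the operator and its adjoint with no pathological behaviour at the boundary of the core and no spurious terms from the non-commutativity of unbounded sums.
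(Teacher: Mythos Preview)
The paper does not prove this theorem at all: it is merely \emph{recorded} as a known result with the sentence ``we record the following general well-posedness result, see \cite{Pi2009-1,PIC_2010:1889,Picard2015}'' and no argument is given. So there is no ``paper's own proof'' to compare against.

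That said, your outline is the standard proof from the cited references and is essentially correct. The decomposition $M(z)=M(0)+M'(0)z+z^{2}\tilde{M}(z)$, the computation of the symmetric part using that $M(0)$ (acting on the spatial factor) commutes with $\partial_{0}$ (acting on the temporal factor), the absorption of the $O(1/\nu)$ remainder by enlarging $\nu_{0}$, and the duality argument for surjectivity are exactly how the result is established in Picard's framework. Your identification of the closure step as the delicate point is also accurate; in the references this is handled by first showing the accretivity estimate on the core $D(\partial_{0})\cap D(A)$ and then noting that the estimate persists under closure since it is a norm inequality. For causality, the references typically argue via analyticity of the resolvent in the right half-plane (a Paley--Wiener type argument) rather than the direct cutoff you sketch, but both routes are valid and lead to the same conclusion.
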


Indeed, apart from occasional side remarks we will simply have
\[
M\left(\partial_{0}^{-1}\right)=M_{0}+\partial_{0}^{-1}M_{1}
\]
and since $\partial_{0}$, $A$ can be \emph{continuously} extended
to suitable extrapolation spaces, it is justified\footnote{Albeit this being sometimes confusing and misleading, it is a common
practice in the field of partial differential equations. E.g. one
frequently writes 
\[
\Delta=\partial_{1}^{2}+\partial_{2}^{2}
\]
although $\phi\in D\left(\Delta\right)$ does in general not – as
the notation appears to suggest – allow for $\phi\in D\left(\partial_{1}^{2}\right)\cap D\left(\partial_{2}^{2}\right)$.} to drop the closure bar, which we shall do henceforth.

\section{\label{sec:Maxwell's-Equations-and-Elastodynamics}Maxwell's Equations
and the Equations of Linear Elasticity as Evo-Systems }

\subsection{Maxwell's Equations as an Evo-Systems.}

James Clerk Maxwell developed his new ideas on electro-magnetic waves
in 1861-64 resulting in his famous two volume publication: A Treatise
on Electricity and Magnetism, \cite{maxwell1873treatise}. His ingenious
contribution to what we nowadays call Maxwell's equations is to amend
Ampere's law with a so-called \emph{displacement current} term. Heaviside
and Gibbs have given the system in its now familiar form as 
\begin{align*}
\partial_{0}D+\sigma E-\curl H & =-j_{ext}\,,\;\mbox{(Ampere's law)}\\
\partial_{0}B+\curl E & =0,\;\mbox{(Faraday's law of induction)}\\
D & =\epsilon E,\\
B & =\mu H.
\end{align*}
The usually included divergence conditions are redundant, since the
two equations together with the material relations can be seen to
be leading already to a well-posed initial boundary value problem.
The so-called \emph{six-vector} block matrix \emph{form}: {\small{}
\[
\left(\partial_{0}\left(\begin{array}{cc}
\epsilon & 0\\
0 & \mu
\end{array}\right)+\left(\begin{array}{cc}
\sigma & 0\\
0 & 0
\end{array}\right)+\left(\begin{array}{cc}
0 & -\curl\\
\interior\curl & 0
\end{array}\right)\right)\left(\begin{array}{c}
E\\
H
\end{array}\right)=\left(\begin{array}{c}
-j_{ext}\\
0
\end{array}\right)
\]
}brings us already close to our initial goal to formulate the equations
as an evo-system. Here $\interior\curl$ denotes the  $L^{2}$ -closure
of the classical $\curl$ defined on $C_{1}\left(\mathbb{R}^{3}\right)$-vector
fields vanishing outside closed, bounded subsets of $\mathbb{R}^{3}$.
Moreover, $\curl\coloneqq\interior\curl^{*}$and so the spatial Maxwell
operator is skew-selfadjoint in $L^{2}\left(\mathbb{R}^{3},\mathbb{R}^{6}\right)$.
In case of a domain $\Omega$ with boundary we take $\interior\curl$
constructed analogously with $C_{1}\left(\Omega\right)$-vector fields
vanishing outside closed, bounded sets contained in $\Omega$, where
$\Omega$ is a non-empty open set in $\mathbb{R}^{3}$ (\textbf{strong}
definition of $\interior\curl$) and define
\begin{equation}
\curl\coloneqq\interior\curl^{*}\label{eq:weak-1}
\end{equation}

(\textbf{weak}\footnote{\textbf{\label{fn:Of-course-weak}}Of course\textbf{ }``weak equals
strong''. It is $C_{1}\left(\Omega\right)\cap D\left(\curl\right)$
dense in $D\left(\curl\right)$ by T. Kasuga's argument, see \cite{Ref275},
\cite[section 2.1]{Leis:Buch:2}, the strong definition of $\curl$
as the closure $\overline{\curl\restricted{C_{1}\left(\Omega\right)\cap D\left(\curl\right)}}$
equals its weak definition. Consequently, also $\interior\curl=\curl^{*}=\left(\curl\restricted{C_{1}\left(\Omega\right)\cap D\left(\curl\right)}\right)^{*}$,
which confirms ``weak equals strong'' for $\interior\curl$ as well. } definition of $\curl$). Thus we arrive indeed at the evo-system\footnote{Here we have thrown in an extra magnetic external source term, since
mathematically it is no obstacle to treat $k_{\mathrm{ext}}\not=0$.} 
\[
\left(\partial_{0}M\left(\partial_{0}^{-1}\right)+A\right)\left(\begin{array}{c}
E\\
H
\end{array}\right)=\left(\begin{array}{c}
-j_{\mathrm{ext}}\\
k_{\mathrm{ext}}
\end{array}\right)
\]
{\small{}with $M\left(\partial_{0}^{-1}\right)=M\left(0\right)+\partial_{0}^{-1}M^{\prime}\left(0\right)$
and here specifically
\begin{equation}
M\left(0\right)=\left(\begin{array}{cc}
\epsilon & 0\\
0 & \mu
\end{array}\right),\,M^{\prime}\left(0\right)=\left(\begin{array}{cc}
\sigma & 0\\
0 & 0
\end{array}\right),\,A=\left(\begin{array}{cc}
0 & -\curl\\
\interior\curl & 0
\end{array}\right).\label{eq:em-material}
\end{equation}
}which satisfies the well-posedness constraint if we assume $\epsilon,\mu$
selfadjoint and (compare \eqref{eq:symdef} and \eqref{fn:sym})
\begin{equation}
\nu\epsilon+\mathrm{sym}\left(\sigma\right),\mu\geq\eta_{0}>0,\label{eq:posdef-em}
\end{equation}
for all sufficiently large $\nu\in\oi0\infty$. Note that with this
assumption also $\epsilon$ having a non-trivial null space, the so-called
eddy current problem, can be handled without further adjustments.
Of course, in the spirit of Theorem \ref{thm:well} we could consider
more general media. More recently, so-called electro-magnetic \emph{metamaterials}
have come into focus, which are media, where $M^{\prime\prime}\not=0$
or $M\left(z\right)$ is \emph{not} block-diagonal. To classify some
prominent cases, there are for example:
\begin{itemize}
\item Bi-anisotropic media, characterized by 
\[
M\left(0\right)=\left(\begin{array}{cc}
\epsilon & \kappa^{*}\\
\kappa & \mu
\end{array}\right),\:\kappa\not=0.
\]
Since, due to \eqref{eq:posdef}, we must have $M\left(0\right)\geq0$,
we get $\epsilon\geq0$ and 
\[
\left|\mu^{-1/2}\kappa\epsilon^{-1/2}\right|\leq1.
\]
Note that this is a strong smallness constraint on the off-diagonal
entry $\kappa$. For example in homogeneous, isotropic media $c_{0}=\epsilon^{-1/2}\mu^{-1/2}$
is the speed of light and the above condition yields
\[
\left|\kappa\right|\leq\frac{1}{c_{0}}.
\]
\item Chiral media: 
\[
M^{\prime}\left(0\right)=\left(\begin{array}{cc}
0 & -\chi\\
\chi & 0
\end{array}\right),\:\chi\not=0\mbox{ selfadjoint}.
\]
\item Omega media: 
\[
M^{\prime}\left(0\right)=\left(\begin{array}{cc}
0 & \chi\\
\chi & 0
\end{array}\right),\:\chi\not=0\mbox{ skew-selfadjoint}.
\]
\end{itemize}

\subsection{The Equations of Linear Elasto-Dynamics as an Evo-System }

Linear elasto-dynamics is usually discussed in a symmetric tensor-valued
$L^{2}$-setting for the stress $T$, i.e. $T\in L^{2}\left(\Omega,\mathrm{sym}\left[\mathbb{R}^{3\times3}\right]\right)$,
and a vector $L^{2}$-setting for the displacement $u\in L^{2}\left(\Omega,\mathbb{R}^{3}\right)$.
Here $\mathrm{sym}$ is the (orthogonal) projector onto real-symmetric-matrix-valued
$L^{2}$-functions. More precisely, we extend $\mathrm{sym}$ to the
matrix-valued case by letting
\begin{eqnarray*}
\mathrm{sym}:L^{2}\left(\Omega,\mathbb{R}^{3\times3}\right) & \to & L^{2}\left(\Omega,\mathbb{R}^{3\times3}\right),\\
W & \mapsto & \frac{1}{2}\left(W+W^{*}\right),
\end{eqnarray*}
where the adjoint $W^{*}$ is taken point-wise by the standard Frobenius
inner product 
\[
\left(T,S\right)\mapsto\mathrm{trace}\left(T^{\top}S\right)
\]
for $3\times3$-matrices, such that 
\begin{eqnarray*}
\mathbb{R}^{3\times3} & \to & \mathbb{R}^{6}\\
\left(\begin{array}{ccc}
T_{00} & T_{01} & T_{02}\\
T_{10} & T_{11} & T_{12}\\
T_{20} & T_{21} & T_{22}
\end{array}\right) & \mapsto & \left(\begin{array}{c}
T_{00}\\
T_{11}\\
T_{22}\\
T_{12}\\
T_{20}\\
T_{01}\\
T_{21}\\
T_{02}\\
T_{10}
\end{array}\right)
\end{eqnarray*}
is unitary. Then with
\begin{eqnarray*}
\iota_{\mathrm{sym}}:L^{2}\left(\Omega,\mathrm{sym}\left[\mathbb{R}^{3\times3}\right]\right) & \to & L^{2}\left(\Omega,\mathbb{R}^{3\times3}\right)\\
T & \mapsto & T,
\end{eqnarray*}
denoting the canonical embedding of the subspace $L^{2}\left(\Omega,\mathrm{sym}\left[\mathbb{R}^{3\times3}\right]\right)$
in $L^{2}\left(\Omega,\mathbb{R}^{3\times3}\right)$ we have
\begin{eqnarray*}
\iota_{\mathrm{sym}}^{*}:L^{2}\left(\Omega,\mathbb{R}^{3\times3}\right) & \to & L^{2}\left(\Omega,\mathrm{sym}\left[\mathbb{R}^{3\times3}\right]\right)\\
W & \mapsto & \mathrm{sym}W
\end{eqnarray*}
and so we have the useful factorization
\[
\mathrm{sym}=\iota_{\mathrm{sym}}\iota_{\mathrm{sym}}^{*}.
\]
With this observation we can now approach the standard equations of
elasticity theory. The dynamics of elastic processes is commonly captured
in a second order formulation for the displacement $u$ by
\[
\rho_{*}\partial_{0}^{2}u-\Div C\Grad u=f,
\]
where 
\begin{eqnarray*}
\Grad u & \coloneqq & \iota_{\mathrm{sym}}^{*}\left(\nabla u\right)\\
\Div T & \coloneqq & \left(\nabla^{\top}T\right)^{\top}
\end{eqnarray*}
for symmetric $T$, i.e. $T\in L^{2}\left(\Omega,\mathrm{sym}\left[\mathbb{R}^{3\times3}\right]\right)$.
The elasticity `tensor', i.e. rather the mapping
\[
C:L^{2}\left(\Omega,\mathrm{sym}\left[\mathbb{R}^{3\times3}\right]\right)\to L^{2}\left(\Omega,\mathrm{sym}\left[\mathbb{R}^{3\times3}\right]\right)
\]
and the mass density operator
\[
\rho_{*}:L^{2}\left(\Omega,\mathbb{R}^{3}\right)\to L^{2}\left(\Omega,\mathbb{R}^{3}\right)
\]
are assumed to be selfadjoint and strictly positive definite.

The origin, from which the above second order system is derived, is
naturally a system of algebraic and first order differential equations.
The original system can be easily reconstructed by re-introducing
the relevant physical quantities velocity $v\coloneqq\partial_{0}u$
and stress $T\coloneqq C\Grad u$. Thus, we arrive at the system
\begin{eqnarray*}
\rho_{*}\partial_{0}v-\Div T & = & f,\\
T & = & C\Grad\partial_{0}^{-1}v,
\end{eqnarray*}
in the unknowns $v$ and $T$. Differentiating the second equation
with respect to time, we end up with a system of the block operator
matrix form
\begin{align*}
\left(\partial_{0}\left(\begin{array}{cc}
\rho_{*} & 0\\
0 & C^{-1}
\end{array}\right)+\left(\begin{array}{cc}
0 & -\Div\\
-\Grad & 0
\end{array}\right)\right)\left(\begin{array}{c}
v\\
T
\end{array}\right) & =\left(\begin{array}{c}
f\\
0
\end{array}\right).
\end{align*}
Choosing now for example a homogeneous Dirichlet boundary condition,
i.e. we replace $\Grad$ by\footnote{Korn's inequality shows that the closure bar is superfluous 
\[
\interior\Grad=\iota_{\mathrm{sym}}^{*}\interior\grad.
\]
} 
\[
\interior\Grad\coloneqq\overline{\iota_{\mathrm{sym}}^{*}\interior\grad},
\]
where $\interior\grad$ is the closure of differentiation for vector
fields (the Jacobian matrix) with compact support in $\Omega$ as
a mapping from $L^{2}\left(\Omega,\mathbb{R}^{3}\right)$ to $L^{2}\left(\Omega,\mathbb{R}^{3\times3}\right)$,
and 
\[
\Div\coloneqq\dive\iota_{\mathrm{sym}}
\]
so that
\[
\interior\Grad=-\Div^{*},
\]
we are led to consider an evo-system of the form

\begin{align}
\left(\partial_{0}\left(\begin{array}{cc}
\rho_{*} & 0\\
0 & C^{-1}
\end{array}\right)+\left(\begin{array}{cc}
0 & -\Div\\
-\interior\Grad & 0
\end{array}\right)\right)\left(\begin{array}{c}
v\\
T
\end{array}\right) & =\left(\begin{array}{c}
f\\
g
\end{array}\right).\label{eq:elast}
\end{align}
\begin{rem}We note that also here we have ``weak equals strong''
following the same rationale as in the electro-magneto-dynamics case,
compare Footnote \ref{fn:Of-course-weak}.\end{rem} In the light
of \eqref{eq:posdef} the well-posedness results from assuming that
\begin{equation}
\rho_{*},C\geq\eta_{0}>0\label{eq:posdef-elast}
\end{equation}
for some real constant $\eta_{0}$.

\section{\label{sec:An-Interface-Coupling}An Interface Coupling Mechanism.}

After the above preliminary considerations, we are now ready to consider
the situation, where the electro-magnetic field in one region interacts
with elastic media in another region via some common interface. Rather
than basing our choice of transmission constraints on the interface
by physical arguments, we shall explore a deep connection between
electro-magneto-dynamics and elasto-dynamics to arrive at natural
transmission conditions built into the construction of the evo-system.
This construction will utilize the idea of a \emph{mother-descendant}
construction introduced in \cite{zbMATH06250993}, see \cite{ZAMM:ZAMM201300297}
for a more viable version, which we will briefly recall.

\subsection{Mother Operators and their Descendants}

We recall from \cite{zbMATH06250993} the following simple but crucial
lemma.

\begin{lem}Let $C:D\left(C\right)\subseteq H\to Y$ be a closed densely-defined
linear operator between Hilbert spaces $H,\:Y$. Moreover, let $B:Y\to X$
be a continuous linear operator into another Hilbert space $X$. If
$C^{*}B^{*}$ is densely defined, then
\[
\overline{BC}=\left(C^{*}B^{*}\right)^{*}.
\]
\end{lem}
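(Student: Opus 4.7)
The plan is to reduce the claim to the standard identity $\overline{T}=T^{**}$, which holds for any densely defined linear operator $T$ whose adjoint is densely defined. Applied to $T=BC$, the lemma will follow at once from the adjoint product formula
\[
(BC)^{*}=C^{*}B^{*}.
\]
Since $B$ is continuous and everywhere defined on $Y$, its Hilbert-space adjoint $B^{*}\colon X\to Y$ is itself continuous and everywhere defined on $X$, and the composition $BC$ has $D(BC)=D(C)$ and is therefore densely defined in $H$; this is the preliminary setup that makes every adjoint in sight legitimate.

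To establish the adjoint product formula I would fix an arbitrary $y\in X$ and observe that, because $B^{*}$ is defined on all of $X$,
\[
\scp{BCx}{y}_{X}=\scp{Cx}{B^{*}y}_{Y}
\]
for every $x\in D(C)$. Consequently $y\in D((BC)^{*})$ precisely when the linear functional $x\mapsto\scp{Cx}{B^{*}y}_{Y}$ is continuous in the $H$-norm on $D(C)$; by the very definition of $C^{*}$ this is equivalent to $B^{*}y\in D(C^{*})$, i.e.\ to $y\in D(C^{*}B^{*})$, and on this common domain the action agrees: $(BC)^{*}y=C^{*}B^{*}y$. This yields the operator identity $(BC)^{*}=C^{*}B^{*}$.

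With this identity in hand, the hypothesis that $C^{*}B^{*}$ is densely defined immediately upgrades to the statement that $(BC)^{*}$ is densely defined, so $BC$ is closable and $\overline{BC}=(BC)^{**}$. Taking the adjoint once more therefore gives
\[
\overline{BC}=(BC)^{**}=(C^{*}B^{*})^{*},
\]
which is the asserted equality.

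I do not anticipate a genuine obstacle here; the proof is essentially a bookkeeping exercise on domains. The only step that needs any care is the inclusion $D((BC)^{*})\subseteq D(C^{*}B^{*})$: one must translate $H$-continuity on $D(C)$ of the functional $x\mapsto\scp{Cx}{B^{*}y}_{Y}$ into the membership $B^{*}y\in D(C^{*})$. This is, however, precisely the defining property of the adjoint $C^{*}$ and relies only on the density of $D(C)$ in $H$, which is part of the assumption that $C$ is densely defined.
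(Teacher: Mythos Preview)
Your argument is correct and coincides with the paper's own proof: both establish the operator identity $(BC)^{*}=C^{*}B^{*}$ by unwinding the adjoint via $\scp{BCx}{y}_{X}=\scp{Cx}{B^{*}y}_{Y}$, and then invoke the standard fact $\overline{BC}=(BC)^{**}$ once $(BC)^{*}=C^{*}B^{*}$ is densely defined. The only cosmetic difference is that the paper presents the two inclusions $C^{*}B^{*}\subseteq(BC)^{*}$ and $D((BC)^{*})\subseteq D(C^{*}B^{*})$ separately, whereas you phrase the equality as a single equivalence.
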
\begin{proof}It is 
\[
C^{*}B^{*}\subseteq\left(BC\right)^{*}.
\]
If $\phi\in D\left(\left(BC\right)^{*}\right)$ then 
\[
\left\langle BCu|\phi\right\rangle _{X}=\left\langle u|\left(BC\right)^{*}\phi\right\rangle _{H}
\]
for all $u\in D\left(C\right)$. Thus, we have
\[
\left\langle Cu|B^{*}\phi\right\rangle _{Y}=\left\langle BCu|\phi\right\rangle _{X}=\left\langle u|\left(BC\right)^{*}\phi\right\rangle _{H}
\]
for all $u\in D\left(C\right)$ and we read off that $B^{*}\phi\in D\left(C^{*}\right)$
and 
\[
C^{*}B^{*}\phi=\left(BC\right)^{*}\phi.
\]
Thus we have
\[
\left(BC\right)^{*}=C^{*}B^{*}.
\]
If now $C^{*}B^{*}$ is densely defined, we have for its adjoint operator
\[
\left(C^{*}B^{*}\right)^{*}=\overline{BC}.
\]
\end{proof} As a consequence we have that the \emph{descendant}
\[
\overline{\left(\begin{array}{cc}
1 & 0\\
0 & B
\end{array}\right)\left(\begin{array}{cc}
0 & -C^{*}\\
C & 0
\end{array}\right)}\left(\begin{array}{cc}
1 & 0\\
0 & B^{*}
\end{array}\right)=\left(\begin{array}{cc}
0 & -C^{*}B^{*}\\
\overline{BC} & 0
\end{array}\right)
\]
indeed inherits its skew-selfadjointness from its \emph{mother} $\left(\begin{array}{cc}
0 & -C^{*}\\
C & 0
\end{array}\right)$ (with $C$ replaced by $\overline{BC}$). Moreover, we record the
following result on the stability of well-posedness in the mother-descendant
process.

\begin{thm}\label{th:mother}Let $C:D\left(C\right)\subseteq H\to Y$
be a closed densely-defined linear operator between Hilbert spaces
$H,\:Y$. Moreover, let $B:Y\to X$ be a continuous linear operator
into another Hilbert space $X$ with a closed range $B\left[Y\right]$
such that $C^{*}B^{*}$ is densely defined. Then, if 
\[
\left(\partial_{0}M\left(\partial_{0}^{-1}\right)+\left(\begin{array}{cc}
0 & -C^{*}\\
C & 0
\end{array}\right)\right)\left(\begin{array}{c}
U_{0}\\
U_{1}
\end{array}\right)=\left(\begin{array}{c}
F_{0}\\
F_{1}
\end{array}\right)
\]
with data $\left(\begin{array}{c}
F_{0}\\
F_{1}
\end{array}\right)\in H_{\nu}\left(\mathbb{R},H\oplus X\right)$ and a solution $\left(\begin{array}{c}
U_{0}\\
U_{1}
\end{array}\right)\in H_{\nu}\left(\mathbb{R},H\oplus X\right)$ is a well-posed evo-system (satisfying in particular \eqref{eq:posdef}),
so is the descendant problem
\[
\left(\partial_{0}\tilde{M}\left(\partial_{0}^{-1}\right)+\tilde{A}\right)U=\left(\begin{array}{c}
F_{0}\\
G_{1}
\end{array}\right)\in H_{\nu}\left(\mathbb{R},H\oplus B\left[Y\right]\right),
\]
where
\begin{eqnarray*}
\tilde{M}\left(\partial_{0}^{-1}\right) & = & \left(\begin{array}{cc}
1 & 0\\
0 & B
\end{array}\right)M\left(\partial_{0}^{-1}\right)\left(\begin{array}{cc}
1 & 0\\
0 & B^{*}
\end{array}\right),\\
\tilde{A} & = & \overline{\left(\begin{array}{cc}
1 & 0\\
0 & B
\end{array}\right)\left(\begin{array}{cc}
0 & -C^{*}\\
C & 0
\end{array}\right)}\left(\begin{array}{cc}
1 & 0\\
0 & B^{*}
\end{array}\right).
\end{eqnarray*}

\end{thm}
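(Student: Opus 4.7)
My plan is to verify the three hypotheses of Theorem \ref{thm:well} directly for the descendant system, posed on $H_{\nu}(\mathbb{R}, H \oplus B[Y])$. This is a genuine Hilbert-space setting because the assumption that $B[Y]$ is closed in $X$ makes it itself a Hilbert space. Write $\tilde{B} \coloneqq \begin{pmatrix} 1 & 0 \\ 0 & B \end{pmatrix}$, so that $\tilde{B}^{*} = \begin{pmatrix} 1 & 0 \\ 0 & B^{*} \end{pmatrix}$ and $\tilde{M}(\partial_{0}^{-1}) = \tilde{B}\, M(\partial_{0}^{-1})\, \tilde{B}^{*}$. The three items to check are skew-selfadjointness of $\tilde{A}$, the rational $\mathcal{L}$-valued structure of $\tilde{M}$ with $\tilde{M}(0)$ selfadjoint, and the positivity condition \eqref{eq:posdef} for $\tilde{M}$.

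For skew-selfadjointness, a direct graph-closure computation (using that $\tilde{B}$ only modifies the $(2,1)$-block and $\tilde{B}^{*}$ only the $(1,2)$-block) identifies
\[
\tilde{A} = \begin{pmatrix} 0 & -C^{*}B^{*} \\ \overline{BC} & 0 \end{pmatrix}
\]
on the domain $D(\overline{BC}) \oplus D(C^{*}B^{*})$. The preceding lemma gives $\overline{BC} = (C^{*}B^{*})^{*}$, so $\tilde{A}$ takes the canonical form $\begin{pmatrix} 0 & -K^{*} \\ K & 0 \end{pmatrix}$ with $K \coloneqq \overline{BC}$ a closed, densely defined operator from $H$ into $B[Y]$; the density is inherited from $D(C)$, and the range lies in $B[Y]$ because $\mathrm{range}(BC) \subseteq B[Y]$ and $B[Y]$ is closed. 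Hence $\tilde{A}$ is skew-selfadjoint on $H \oplus B[Y]$. Rationality of $z \mapsto \tilde{M}(z) = \tilde{B}\, M(z)\, \tilde{B}^{*}$ near $0$ is immediate from boundedness of $\tilde{B}$ and $\tilde{B}^{*}$, and $\tilde{M}(0)^{*} = \tilde{B}\, M(0)^{*}\, \tilde{B}^{*} = \tilde{B}\, M(0)\, \tilde{B}^{*} = \tilde{M}(0)$.

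The main obstacle is the positivity condition. For $\phi = (\phi_{0}, \phi_{1}) \in H \oplus B[Y]$,
\[
\langle (\nu \tilde{M}(0) + \mathrm{sym}(\tilde{M}'(0)))\, \phi \,|\, \phi \rangle = \langle (\nu M(0) + \mathrm{sym}(M'(0)))\, \tilde{B}^{*}\phi \,|\, \tilde{B}^{*}\phi \rangle \geq \eta_{0}\, \|\tilde{B}^{*}\phi\|^{2},
\]
so it suffices to bound $\|\tilde{B}^{*}\phi\|^{2}$ below by a multiple of $\|\phi\|^{2}$, which reduces to showing $B^{*}$ is bounded below on $B[Y]$. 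Here the closed-range hypothesis is essential: $\ker B^{*} = B[Y]^{\perp}$ makes $B^{*}|_{B[Y]}$ injective, and the closed-range theorem yields that $\mathrm{range}(B^{*})$ is closed. Therefore $B^{*}|_{B[Y]}$ is a continuous bijection between Hilbert spaces, and the bounded inverse theorem provides $c > 0$ with $\|B^{*}\phi_{1}\| \geq c\, \|\phi_{1}\|$ on $B[Y]$. Consequently $\|\tilde{B}^{*}\phi\|^{2} \geq \min\{1, c^{2}\}\, \|\phi\|^{2}$, so \eqref{eq:posdef} holds for $\tilde{M}$ with new constant $\tilde{\eta}_{0} = \min\{1, c^{2}\}\, \eta_{0} > 0$. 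Theorem \ref{thm:well} then applies to the descendant system and delivers Hadamard well-posedness together with causality.
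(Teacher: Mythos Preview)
Your proof is correct and follows essentially the same route as the paper: both reduce well-posedness of the descendant to Theorem~\ref{thm:well}, and both establish the key positivity estimate by showing that $B^{*}$ is bounded below on $B[Y]$ via the closed-range hypothesis and the bounded inverse theorem. The only cosmetic difference is that you verify the skew-selfadjointness of $\tilde{A}$ and the rationality/selfadjointness of $\tilde{M}$ explicitly inside the proof, whereas the paper treats the skew-selfadjointness as already settled by the preceding lemma and focuses solely on the positivity condition; your constant $\min\{1,c^{2}\}\,\eta_{0}$ and the paper's $c_{*}\bigl\lVert\bigl(\begin{smallmatrix}1&0\\0&B^{*}\iota_{B[Y]}\end{smallmatrix}\bigr)^{-1}\bigr\rVert^{-2}$ are just two packagings of the same lower bound.
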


\begin{proof}The positive-definiteness condition \eqref{eq:posdef}
carries over to the new material law operator in the following way.
If
\[
\nu M\left(0\right)+\mathrm{sym}\left(M^{\prime}\left(0\right)\right)\geq c_{*}>0
\]
for all $\nu\in\lci{\nu_{0}}\infty$ and some $\nu_{0}\in\oi0\infty$
, then
\begin{eqnarray*}
\nu\tilde{M}\left(0\right)+\mathrm{sym}\left(\tilde{M}^{\prime}\left(0\right)\right) & = & \nu\left(\begin{array}{cc}
1 & 0\\
0 & B
\end{array}\right)M\left(0\right)\left(\begin{array}{cc}
1 & 0\\
0 & B^{*}
\end{array}\right)+\\
 &  & +\mathrm{sym}\left(\left(\begin{array}{cc}
1 & 0\\
0 & B
\end{array}\right)M^{\prime}\left(0\right)\left(\begin{array}{cc}
1 & 0\\
0 & B^{*}
\end{array}\right)\right)
\end{eqnarray*}
and we estimate for $\left(V_{0},V_{1}\right)\in H\oplus B\left[Y\right]$

\begin{eqnarray*}
\nu\left\langle \left(\begin{array}{c}
V_{0}\\
V_{1}
\end{array}\right)\Big|\left(\begin{array}{cc}
1 & 0\\
0 & B
\end{array}\right)M\left(0\right)\left(\begin{array}{cc}
1 & 0\\
0 & B^{*}
\end{array}\right)\left(\begin{array}{c}
V_{0}\\
V_{1}
\end{array}\right)\right\rangle _{H\oplus B\left[Y\right]}+\\
+\left\langle \left(\begin{array}{c}
V_{0}\\
V_{1}
\end{array}\right)\Big|\left(\begin{array}{cc}
1 & 0\\
0 & B
\end{array}\right)\mathrm{sym}\left(M^{\prime}\left(0\right)\right)\left(\begin{array}{cc}
1 & 0\\
0 & B^{*}
\end{array}\right)\left(\begin{array}{c}
V_{0}\\
V_{1}
\end{array}\right)\right\rangle _{H\oplus B\left[Y\right]}=\\
=\nu\left\langle \left(\begin{array}{cc}
1 & 0\\
0 & B^{*}
\end{array}\right)\left(\begin{array}{c}
V_{0}\\
V_{1}
\end{array}\right)\Big|M\left(0\right)\left(\begin{array}{cc}
1 & 0\\
0 & B^{*}
\end{array}\right)\left(\begin{array}{c}
V_{0}\\
V_{1}
\end{array}\right)\right\rangle _{H\oplus Y}+
\end{eqnarray*}
\begin{eqnarray*}
\\
+\left\langle \left(\begin{array}{cc}
1 & 0\\
0 & B^{*}
\end{array}\right)\left(\begin{array}{c}
V_{0}\\
V_{1}
\end{array}\right)\Big|\mathrm{sym}\left(M^{\prime}\left(0\right)\right)\left(\begin{array}{cc}
1 & 0\\
0 & B^{*}
\end{array}\right)\left(\begin{array}{c}
V_{0}\\
V_{1}
\end{array}\right)\right\rangle _{H\oplus Y},\\
\geq c_{*}\left\langle \left(\begin{array}{cc}
1 & 0\\
0 & B^{*}
\end{array}\right)\left(\begin{array}{c}
V_{0}\\
V_{1}
\end{array}\right)\Big|\left(\begin{array}{cc}
1 & 0\\
0 & B^{*}
\end{array}\right)\left(\begin{array}{c}
V_{0}\\
V_{1}
\end{array}\right)\right\rangle _{H\oplus Y}\\
\geq\tilde{c}_{*}\left\langle \left(\begin{array}{c}
V_{0}\\
V_{1}
\end{array}\right)\Big|\left(\begin{array}{c}
V_{0}\\
V_{1}
\end{array}\right)\right\rangle _{H\oplus B\left[Y\right]}
\end{eqnarray*}
Indeed, since by the closed range asumption $B\left[Y\right]$ and
$B^{*}\left[X\right]$ are Hilbert spaces and by the closed graph
theorem the operator
\begin{eqnarray*}
\left(\begin{array}{cc}
1 & 0\\
0 & B^{*}\iota_{B\left[Y\right]}
\end{array}\right):H\oplus B\left[Y\right] & \to & H\oplus B^{*}\left[X\right]\\
\left(\begin{array}{c}
V_{0}\\
V_{1}
\end{array}\right) & \mapsto & \left(\begin{array}{c}
V_{0}\\
B^{*}V_{1}
\end{array}\right)
\end{eqnarray*}
has a continuous inverse, we have
\begin{eqnarray*}
 &  & \left|\left(\begin{array}{c}
V_{0}\\
V_{1}
\end{array}\right)\right|_{H\oplus B\left[Y\right]}=\\
 & = & \left|\left(\begin{array}{cc}
1 & 0\\
0 & B^{*}\iota_{B\left[Y\right]}
\end{array}\right)^{-1}\left(\begin{array}{cc}
1 & 0\\
0 & B^{*}
\end{array}\right)\left(\begin{array}{c}
V_{0}\\
V_{1}
\end{array}\right)\right|_{H\oplus B\left[Y\right]}\\
 & \leq & \left\Vert \left(\begin{array}{cc}
1 & 0\\
0 & B^{*}\iota_{B\left[Y\right]}
\end{array}\right)^{-1}\right\Vert \left|\left(\begin{array}{cc}
1 & 0\\
0 & B^{*}
\end{array}\right)\left(\begin{array}{c}
V_{0}\\
V_{1}
\end{array}\right)\right|_{H\oplus Y}
\end{eqnarray*}
and so we may choose
\[
\tilde{c}_{*}=c_{*}\left\Vert \left(\begin{array}{cc}
1 & 0\\
0 & B^{*}\iota_{B\left[Y\right]}
\end{array}\right)^{-1}\right\Vert ^{-2}
\]
to confirm that
\[
\nu\tilde{M}\left(0\right)+\mathrm{sym}\left(\tilde{M}^{\prime}\left(0\right)\right)\geq\tilde{c}_{*}>0
\]
for all $\nu\in\lci{\nu_{0}}\infty$ and some $\nu_{0}\in\oi0\infty$~.\end{proof}

As a particular instance of this construction we can take $B$ specifically
as $\iota_{S}^{*}$, where $\iota_{S}:S\to H$, $x\mapsto x$, is
the canonical embedding of the closed subspace $S$ in $H$. Then
\[
\overline{\left(\begin{array}{cc}
1 & 0\\
0 & \iota_{S}^{*}
\end{array}\right)\left(\begin{array}{cc}
0 & -C\\
C^{*} & 0
\end{array}\right)}\left(\begin{array}{cc}
1 & 0\\
0 & \iota_{S}
\end{array}\right)=\left(\begin{array}{cc}
0 & -C\iota_{S}\\
\overline{\iota_{S}^{*}C^{*}} & 0
\end{array}\right)
\]
is skew-selfadjoint if $C\:\iota_{S}:D\left(C\right)\cap S\subseteq S\to Y$,
the restriction of $C:D\left(C\right)\subseteq H\to Y$ to the closed
subspace $S\subseteq H$ is densely defined in $S$. This is the construction
we shall employ to approach our specific problem. First we observe
that both physical regimes do indeed have the same \emph{mother}.

\subsection{\label{subsec:Two-Descendants-of}Two Descendants of Non-symmetric
Elasticity }

As a convenient mother to start from we take the theory of non-symmetric
elasticity, W. Nowacki, \cite{zbMATH03315043,Nowacki1986}, leading
to an evo-system of the form
\[
\left(\partial_{0}M_{0}+M_{1}+\left(\begin{array}{cc}
0 & -\dive\\
-\interior\grad & 0
\end{array}\right)\right)\left(\begin{array}{c}
v\\
T
\end{array}\right)=\left(\begin{array}{c}
f\\
g
\end{array}\right).
\]

We shall now discuss two particular descendants.
\begin{enumerate}
\item Classical symmetric elasticity theory can be considered as a descendant
of the form
\begin{align*}
 & \left(\partial_{0}\left(\begin{array}{cc}
1 & 0\\
0 & \iota_{\mathrm{sym}}^{*}
\end{array}\right)M_{0}\left(\begin{array}{cc}
1 & 0\\
0 & \iota_{\mathrm{sym}}
\end{array}\right)+\left(\begin{array}{cc}
1 & 0\\
0 & \iota_{\mathrm{sym}}^{*}
\end{array}\right)M_{1}\left(\begin{array}{cc}
1 & 0\\
0 & \iota_{\mathrm{sym}}
\end{array}\right)+\right.\\
 & \left.+\left(\begin{array}{cc}
0 & -\Div\\
-\interior\Grad & 0
\end{array}\right)\right)\left(\begin{array}{c}
v\\
T_{\mathrm{sym}}
\end{array}\right)=\left(\begin{array}{c}
f\\
g_{\mathrm{sym}}
\end{array}\right),
\end{align*}
where 
\[
\interior\Grad\coloneqq\overline{\iota_{\mathrm{sym}}^{*}\interior\grad}
\]
and 
\[
\Div\coloneqq\dive\iota_{\mathrm{sym}}.
\]
Note that the assumptions of Theorem \ref{th:mother} are clearly
satisfied since smooth elements with compact support are already a
dense sub-domain of $\dive\iota_{\mathrm{sym}}$. In the classical
situation, which we shall assume for simplicity, we have $M_{1}=0$
and
\[
M_{0}=\left(\begin{array}{cc}
\rho_{*} & 0\\
0 & C^{-1}
\end{array}\right).
\]
\item Maxwell's equation are obtained in a sense by the opposite construction.\\
If we denote analogously
\begin{eqnarray*}
\mathrm{skew}:L^{2}\left(\Omega,\mathbb{R}^{3\times3}\right) & \to & L^{2}\left(\Omega,\mathbb{R}^{3\times3}\right),\\
W & \mapsto & \frac{1}{2}\left(W-W^{*}\right),
\end{eqnarray*}
then with
\begin{eqnarray*}
\iota_{\mathrm{skew}}:L^{2}\left(\Omega,\mathrm{skew}\left[\mathbb{R}^{3\times3}\right]\right) & \to & L^{2}\left(\Omega,\mathbb{R}^{3\times3}\right)\\
T & \mapsto & T,
\end{eqnarray*}
denoting the canonical embedding of $L^{2}\left(\Omega,\mathrm{skew}\left[\mathbb{R}^{3\times3}\right]\right)$
in $L^{2}\left(\Omega,\mathbb{R}^{3\times3}\right)$ we find
\begin{eqnarray*}
\iota_{\mathrm{skew}}^{*}:L^{2}\left(\Omega,\mathbb{R}^{3\times3}\right) & \to & L^{2}\left(\Omega,\mathrm{skew}\left[\mathbb{R}^{3\times3}\right]\right)\\
W & \mapsto & \mathrm{skew}W.
\end{eqnarray*}
With this we may now construct the Maxwell evo-system as
\begin{align*}
 & \left(\partial_{0}\left(\begin{array}{cc}
1 & 0\\
0 & -\sqrt{2}I^{*}\iota_{\mathrm{skew}}^{*}
\end{array}\right)M_{0}\left(\begin{array}{cc}
1 & 0\\
0 & -\sqrt{2}\iota_{\mathrm{skew}}I
\end{array}\right)+\right.\\
 & +\left(\begin{array}{cc}
1 & 0\\
0 & -\sqrt{2}I^{*}\iota_{\mathrm{skew}}^{*}
\end{array}\right)M_{1}\left(\begin{array}{cc}
1 & 0\\
0 & -\sqrt{2}\iota_{\mathrm{skew}}I
\end{array}\right)+\\
 & \left.+\left(\begin{array}{cc}
0 & -\mathrm{curl}\\
\interior{\mathrm{curl}} & 0
\end{array}\right)\right)\left(\begin{array}{c}
E\\
H
\end{array}\right)=\left(\begin{array}{c}
f\\
-I^{*}g_{\mathrm{skew}}
\end{array}\right),
\end{align*}
where 
\[
I:\left(\begin{array}{c}
\alpha_{1}\\
\alpha_{2}\\
\alpha_{3}
\end{array}\right)\mapsto\frac{1}{\sqrt{2}}\left(\begin{array}{ccc}
0 & -\alpha_{3} & \alpha_{2}\\
\alpha_{3} & 0 & -\alpha_{1}\\
-\alpha_{2} & \alpha_{1} & 0
\end{array}\right)
\]
 is a unitary transformation and so is its inverse 
\[
I^{*}:\left(\begin{array}{ccc}
0 & -\alpha_{3} & \alpha_{2}\\
\alpha_{3} & 0 & -\alpha_{1}\\
-\alpha_{2} & \alpha_{1} & 0
\end{array}\right)\mapsto\sqrt{2}\left(\begin{array}{c}
\alpha_{1}\\
\alpha_{2}\\
\alpha_{3}
\end{array}\right).
\]
Again, for simplicity we focus on the classical choice of \eqref{eq:em-material}.
We calculate 
\begin{eqnarray*}
 &  & I^{*}\iota_{\mathrm{skew}}^{*}\grad v=\\
 & = & \frac{1}{2}I^{*}\left(\begin{array}{ccc}
0 & \partial_{2}v_{1}-\partial_{1}v_{2} & \partial_{3}v_{1}-\partial_{1}v_{3}\\
\partial_{1}v_{2}-\partial_{2}v_{1} & 0 & \partial_{3}v_{2}-\partial_{2}v_{3}\\
\partial_{1}v_{3}-\partial_{3}v_{1} & \partial_{2}v_{3}-\partial_{3}v_{2} & 0
\end{array}\right)\\
 & = & -\frac{1}{\sqrt{2}}\left(\begin{array}{c}
\partial_{3}v_{2}-\partial_{2}v_{3}\\
\partial_{1}v_{3}-\partial_{3}v_{1}\\
\partial_{2}v_{1}-\partial_{1}v_{2}
\end{array}\right)=\frac{1}{\sqrt{2}}\left(\begin{array}{c}
\partial_{2}v_{3}-\partial_{3}v_{2}\\
\partial_{3}v_{1}-\partial_{1}v_{3}\\
\partial_{1}v_{2}-\partial_{2}v_{1}
\end{array}\right)\\
 & \eqqcolon & \frac{1}{\sqrt{2}}\curl v
\end{eqnarray*}
and also confirm that 
\[
\dive\;\iota_{\mathrm{skew}}I=-\frac{1}{\sqrt{2}}\curl.
\]
In other terms, we have the congruence to a descendant 
\begin{eqnarray*}
\left(\begin{array}{cc}
0 & -\curl\\
\interior\curl & 0
\end{array}\right)=\\
=\left(\begin{array}{cc}
1 & 0\\
0 & -\sqrt{2}I^{*}
\end{array}\right)\left(\begin{array}{cc}
0 & -\dive\:\iota_{\mathrm{skew}}\\
-\overline{\iota_{\mathrm{skew}}^{*}\interior\grad} & 0
\end{array}\right)\left(\begin{array}{cc}
1 & 0\\
0 & -\sqrt{2}I
\end{array}\right),
\end{eqnarray*}
where we have used that
\end{enumerate}
\[
\interior\curl=\sqrt{2}\:I^{*}\overline{\iota_{\mathrm{skew}}^{*}\interior\grad}.
\]
Note that again smooth elements with compact support are a dense sub-domain
of $\dive\:\iota_{\mathrm{skew}}$ and so the assumptions of Theorem
\ref{th:mother} are clearly satisfied. Motivated by the observation
that Maxwell's equations and the (symmetric) elasto-dynamic equations
are both descendants from the asymmetric elasto-dynamics equations
of Novacki, \cite{zbMATH03315043,Nowacki1986}, we will now discuss
boundary interactions between both systems.

\subsection{An Application to Interface Coupling}

Motivated by a paper of F. Cakoni \& G.C. Hsiao, \cite{zbMATH02114868},
where the time-harmonic isotropic homogeneous case of electro-dynamics
and elasticity, respectively, is studied via transmission conditions
across a separating interface, we consider the corresponding time-dependent
case. We assume $\Omega_{0}\cup\Omega_{1}\subseteq\Omega$, such that
the orthogonal decompositions 
\begin{eqnarray}
L^{2}\left(\Omega,\mathbb{R}^{3\times3}\right) & = & L^{2}\left(\Omega_{0},\mathbb{R}^{3\times3}\right)\oplus L^{2}\left(\Omega_{1},\mathbb{R}^{3\times3}\right)\nonumber \\
L^{2}\left(\Omega,\mathbb{R}^{3}\right) & = & L^{2}\left(\Omega_{0},\mathbb{R}^{3}\right)\oplus L^{2}\left(\Omega_{1},\mathbb{R}^{3}\right)\label{eq:deco-v}
\end{eqnarray}
hold, and let $I_{0}\coloneqq\left(\begin{array}{cc}
\iota_{L^{2}\left(\Omega_{0},\mathrm{sym}\left[\mathbb{R}^{3\times3}\right]\right)} & -\iota_{L^{2}\left(\Omega_{1},\mathrm{skew}\left[\mathbb{R}^{3\times3}\right]\right)}\sqrt{2}I\end{array}\right)$, i.e. 
\[
I_{0}\left(\begin{array}{c}
S\\
v
\end{array}\right)=\iota_{L^{2}\left(\Omega_{0},\mathrm{sym}\left[\mathbb{R}^{3\times3}\right]\right)}S-\iota_{L^{2}\left(\Omega_{1},\mathrm{skew}\left[\mathbb{R}^{3\times3}\right]\right)}\sqrt{2}Iv
\]
with the respective canonical embeddings into $L^{2}\left(\Omega,\mathbb{R}^{3\times3}\right)$.
Then 
\begin{eqnarray*}
I_{0}^{*}:L^{2}\left(\Omega,\mathbb{R}^{3\times3}\right) & \to & L^{2}\left(\Omega_{0},\mathrm{sym}\left[\mathbb{R}^{3\times3}\right]\right)\oplus L^{2}\left(\Omega_{1},\mathbb{R}^{3}\right),\\
T & \mapsto & \left(\begin{array}{c}
\iota_{L^{2}\left(\Omega_{0},\mathrm{sym}\left[\mathbb{R}^{3\times3}\right]\right)}^{*}T\\
-\sqrt{2}I^{*}\iota_{L^{2}\left(\Omega_{1},\mathrm{skew}\left[\mathbb{R}^{3\times3}\right]\right)}^{*}T
\end{array}\right),
\end{eqnarray*}
and so
\[
I_{0}^{*}=\left(\begin{array}{c}
\iota_{L^{2}\left(\Omega_{0},\mathrm{sym}\left[\mathbb{R}^{3\times3}\right]\right)}^{*}\\
-\sqrt{2}I^{*}\iota_{L^{2}\left(\Omega_{1},\mathrm{skew}\left[\mathbb{R}^{3\times3}\right]\right)}^{*}
\end{array}\right).
\]

With  this we get a congruence to a descendant construction as
\begin{eqnarray}
A & = & \overline{\left(\begin{array}{cc}
1 & 0\\
0 & I_{0}^{*}
\end{array}\right)\left(\begin{array}{cc}
0 & -\dive\\
-\interior\grad & 0
\end{array}\right)}\left(\begin{array}{cc}
1 & 0\\
0 & I_{0}
\end{array}\right)\label{eq:incl}\\
 &  & \subseteq\left(\begin{array}{cc}
0 & \left(\begin{array}{cc}
-\Div_{\Omega_{0}} & -\curl_{\Omega_{1}}\end{array}\right)\\
\left(\begin{array}{c}
-\Grad_{\Omega_{0}}\\
\curl_{\Omega_{1}}
\end{array}\right) & \left(\begin{array}{cc}
0 & 0\\
0 & 0
\end{array}\right)
\end{array}\right)\label{eq:incl-1}
\end{eqnarray}
and 
\begin{align}
M\left(0\right) & =\left(\begin{array}{cc}
\rho_{*,\Omega_{0}}+\epsilon_{\Omega_{1}} & \left(\begin{array}{cc}
0 & 0\end{array}\right)\\
\left(\begin{array}{c}
0\\
0
\end{array}\right) & \left(\begin{array}{cc}
C_{\Omega_{0}}^{-1} & 0\\
0 & \mu_{\Omega_{1}}
\end{array}\right)
\end{array}\right)\label{eq:M0-mx}\\
M^{\prime}\left(0\right) & =\left(\begin{array}{cc}
\sigma_{\Omega_{1}} & \left(\begin{array}{cc}
0 & 0\end{array}\right)\\
\left(\begin{array}{c}
0\\
0
\end{array}\right) & \left(\begin{array}{cc}
0 & 0\\
0 & 0
\end{array}\right)
\end{array}\right).\label{eq:M1-mx}
\end{align}
The indexes $\Omega_{k}$, $k=0,1,$ are used to denote the respective
supports of the quantities. The coefficients are – as a matter of
simplification labeled in the same meaning as in \eqref{eq:em-material}
and \eqref{eq:elast}, just with the support information added\footnote{Although we consider for convenience and physical relevance this evo-system
in its own right, a formal mother material law – without physical
meaning – could be easily given:
\[
\left(\begin{array}{cc}
\rho_{*,\Omega_{0}}+\epsilon_{\Omega_{1}}+\partial_{0}^{-1}\sigma_{\Omega_{1}} & 0\\
0 & m_{11}
\end{array}\right)
\]
with for example
\[
m_{11}=\iota_{\mathrm{sym},\Omega_{0}}^{*}C_{\Omega_{0}}^{-1}\iota_{\mathrm{sym},\Omega_{0}}+\iota_{\mathrm{skew},\Omega_{0}}^{*}\iota_{\mathrm{skew},\Omega_{0}}+\iota_{\mathrm{skew},\Omega_{1}}^{*}\mu_{\Omega_{1}}\iota_{\mathrm{skew},\Omega_{1}}+\iota_{\mathrm{sym},\Omega_{1}}^{*}\iota_{\mathrm{sym},\Omega_{1}}.
\]
Then the described mother-descendant mechanism would lead to a descendant,
which in turn would be congruent to the described interface system.}. The unknowns are now 
\[
\left(\begin{array}{c}
v_{\Omega_{0}}+E_{\Omega_{1}}\\
\left(\begin{array}{c}
T_{\Omega_{0}}\\
H_{\Omega_{1}}
\end{array}\right)
\end{array}\right)\in H=L^{2}\left(\Omega,\mathbb{R}^{3}\right)\oplus\left(L^{2}\left(\Omega_{0},\mathrm{sym}\left[\mathbb{R}^{3\times3}\right]\right)\oplus L^{2}\left(\Omega_{1},\mathbb{R}^{3}\right)\right),
\]
where the first component is to be understood in the sense of \eqref{eq:deco-v}.
Note that the assumptions of Theorem \ref{th:mother} are clearly
satisfied since smooth elements with compact support in $\Omega_{0}$
and $\Omega_{1}$, respectively, are already a dense sub-domain as
in the separate cases of Subsection \ref{subsec:Two-Descendants-of}.
From the inclusion \eqref{eq:incl},\eqref{eq:incl-1}, we read off
that the resulting evo-system 
\begin{equation}
\left(\partial_{0}M\left(0\right)+M^{\prime}\left(0\right)+A\right)\left(\begin{array}{c}
v_{\Omega_{0}}+E_{\Omega_{1}}\\
\left(\begin{array}{c}
T_{\Omega_{0}}\\
H_{\Omega_{1}}
\end{array}\right)
\end{array}\right)=\left(\begin{array}{c}
f_{\Omega_{0}}-j_{\mathrm{ext},\Omega_{1}}\\
\left(\begin{array}{c}
g_{\mathrm{sym},\Omega_{0}}\\
k_{\mathrm{ext},\Omega_{1}}
\end{array}\right)
\end{array}\right)\label{eq:evo-mix}
\end{equation}
indeed yields
\[
\partial_{0}\left(\rho_{*,\Omega_{0}}+\epsilon_{\Omega_{1}}\right)\left(v_{\Omega_{0}}+E_{\Omega_{1}}\right)-\Div_{\Omega_{0}}T_{\Omega_{0}}-\curl_{\Omega_{1}}H_{\Omega_{1}}=f_{\Omega_{0}}-j_{\mathrm{ext},\Omega_{1}},
\]
which in turn splits into
\begin{eqnarray*}
\partial_{0}\rho_{*,\Omega_{0}}v_{\Omega_{0}}-\Div_{\Omega_{0}}T_{\Omega_{0}} & = & f_{\Omega_{0}},\\
\partial_{0}\epsilon_{\Omega_{1}}E_{\Omega_{1}}-\curl_{\Omega_{1}}H_{\Omega_{1}} & = & -j_{\mathrm{ext},\Omega_{1}}.
\end{eqnarray*}
The second block row yields another pair of equations
\begin{eqnarray*}
\partial_{0}C^{-1}T_{\Omega_{0}}-\Grad v_{\Omega_{0}} & = & g_{\mathrm{sym},\Omega_{0}},\\
\partial_{0}\mu_{\Omega_{1}}H_{\Omega_{1}}+\curl E_{\Omega_{1}} & = & k_{\mathrm{ext},\Omega_{1}}.
\end{eqnarray*}
The actual system models now natural transmission conditions on the
common boundary part $\dot{\Omega}_{0}\cap\dot{\Omega}_{1}$ and the
homogeneous Dirichlet boundary condition on $\dot{\Omega}_{0}\setminus\dot{\Omega}_{1}$
and the standard homogeneous electric boundary condition on $\dot{\Omega}_{1}\setminus\dot{\Omega}_{0}$
without assuming any smoothness of the boundary.

On the contrary, assuming sufficient regularity of the boundary one
can see that the model yields a generalization of the classical transmission
conditions on $\dot{\Omega}_{0}\cap\dot{\Omega}_{1}$: 
\begin{equation}
\begin{array}{rl}
T_{\Omega_{0}}n & =n\times H_{\Omega_{1}},\\
n\times v_{\Omega_{0}} & =n\times E_{\Omega_{1}},
\end{array}\label{eq:transmission}
\end{equation}
where $n$ is a smooth unit normal field on $\dot{\Omega}_{0}\cap\dot{\Omega}_{1}$.
Indeed, with 
\[
\left(\begin{array}{c}
v_{\Omega_{0}}+E_{\Omega_{1}}\\
\left(\begin{array}{c}
T_{\Omega_{0}}\\
H_{\Omega_{1}}
\end{array}\right)
\end{array}\right)\in D\left(A\right)
\]
we have (noting for the smooth exterior unit normal vector fields
$n_{\dot{\Omega}_{0}}$, $n_{\dot{\Omega}_{1}}$ on the boundaries
of $\Omega_{0}$ and $\Omega_{1}$, respectively, that $n_{\dot{\Omega}_{0}}=-n_{\dot{\Omega}_{1}}$
on $\dot{\Omega}_{0}\cap\dot{\Omega}_{1}$) with 
\[
\tilde{A}=\left(\begin{array}{cc}
0 & \left(-\begin{array}{cc}
\Div_{\Omega_{0}} & -\curl_{\Omega_{1}}\end{array}\right)\\
\left(\begin{array}{c}
-\Grad_{\Omega_{0}}\\
\curl_{\Omega_{1}}
\end{array}\right) & \left(\begin{array}{cc}
0 & 0\\
0 & 0
\end{array}\right)
\end{array}\right),
\]
that
\begin{eqnarray*}
0 & = & \left\langle \left(\begin{array}{c}
v_{\Omega_{0}}+E_{\Omega_{1}}\\
\left(\begin{array}{c}
T_{\Omega_{0}}\\
H_{\Omega_{1}}
\end{array}\right)
\end{array}\right)\Big|A\left(\begin{array}{c}
v_{\Omega_{0}}+E_{\Omega_{1}}\\
\left(\begin{array}{c}
T_{\Omega_{0}}\\
H_{\Omega_{1}}
\end{array}\right)
\end{array}\right)\right\rangle _{H}\\
 & = & \left\langle \left(\begin{array}{c}
v_{\Omega_{0}}+E_{\Omega_{1}}\\
\left(\begin{array}{c}
T_{\Omega_{0}}\\
H_{\Omega_{1}}
\end{array}\right)
\end{array}\right)\Big|\tilde{A}\left(\begin{array}{c}
v_{\Omega_{0}}+E_{\Omega_{1}}\\
\left(\begin{array}{c}
T_{\Omega_{0}}\\
H_{\Omega_{1}}
\end{array}\right)
\end{array}\right)\right\rangle _{H}\\
 & = & -\left\langle v_{\Omega_{0}}\Big|\:\Div\:T_{\Omega_{0}}\right\rangle _{L^{2}\left(\Omega_{0},\mathbb{R}^{3}\right)}-\left\langle T_{\Omega_{0}}\Big|\Grad_{\Omega_{0}}v_{\Omega_{0}}\right\rangle _{L^{2}\left(\Omega_{0},\mathbb{R}^{3\times3}\right)}+\\
 &  & +\left\langle H_{\Omega_{1}}\Big|\curl_{\Omega_{1}}E_{\Omega_{1}}\right\rangle _{L^{2}\left(\Omega_{1},\mathbb{R}^{3}\right)}-\left\langle E_{\Omega_{1}}\Big|\curl_{\Omega_{1}}H_{\Omega_{1}}\right\rangle _{L^{2}\left(\Omega_{1},\mathbb{R}^{3}\right)}\\
 & = & -\int_{\dot{\Omega}_{0}\cap\dot{\Omega}_{1}}v_{\Omega_{0}}^{\top}T_{\Omega_{0}}n_{\dot{\Omega}_{0}}do+\int_{\dot{\Omega}_{0}\cap\dot{\Omega}_{1}}n_{\dot{\Omega}_{1}}^{\top}\left(E_{\Omega_{1}}\times H_{\Omega_{1}}\right)do\\
 & = & -\int_{\dot{\Omega}_{0}\cap\dot{\Omega}_{1}}v_{\Omega_{0}}^{\top}T_{\Omega_{0}}n_{\dot{\Omega}_{0}}do+\int_{\dot{\Omega}_{0}\cap\dot{\Omega}_{1}}E_{\Omega_{1}}^{\top}\left(n_{\dot{\Omega}_{0}}\times H_{\Omega_{1}}\right)do.
\end{eqnarray*}
Since $\left(v_{\Omega_{0}}+E_{\Omega_{1}}\right)\in D\left(\interior\grad\right)$
is by construction admissible we may choose $v_{\Omega_{0}}=E_{\Omega_{1}}$
on the interface and conclude that
\begin{equation}
T_{\Omega_{0}}n_{\dot{\Omega}_{0}}=n_{\dot{\Omega}_{0}}\times H_{\Omega_{1}}\label{eq:trans1}
\end{equation}
is a needed transmission condition. In particular, we see
\[
n_{\dot{\Omega}_{0}}^{\top}T_{\Omega_{0}}n_{\dot{\Omega}_{0}}=0.
\]
Inserting the explicit transmission condition \eqref{eq:trans1} now
yields
\begin{eqnarray*}
-\int_{\dot{\Omega}_{0}\cap\dot{\Omega}_{1}}\left(n_{\dot{\Omega}_{0}}\times\left(n_{\dot{\Omega}_{0}}\times\left(v_{\Omega_{0}}-E_{\Omega_{1}}\right)\right)\right)^{\top}\left(n_{\dot{\Omega}_{0}}\times H_{\Omega_{1}}\right)do=\\
=\int_{\dot{\Omega}_{0}\cap\dot{\Omega}_{1}}\left(v_{\Omega_{0}}-E_{\Omega_{1}}\right)^{\top}\left(n_{\dot{\Omega}_{0}}\times H_{\Omega_{1}}\right)do=0,
\end{eqnarray*}
which, with $n_{\dot{\Omega}_{0}}\times H_{\Omega_{1}}$ for $H_{\Omega_{1}}\in D\left(\curl_{\Omega_{1}}\right)$
being sufficiently arbitrary, now implies
\[
n_{\dot{\Omega}_{0}}\times v_{\Omega_{0}}=n_{\dot{\Omega}_{0}}\times E_{\Omega_{1}}
\]
i.e. the continuity of the tangential components 
\[
v_{\Omega_{0},\mathrm{t}}=E_{\Omega_{1},\mathrm{t}},
\]
as a complementing transmission condition. These more or less heuristic
considerations motivate to take the above evo-system as a appropriate
generalization to cases, where the boundary does \emph{not} have a
reasonable normal vector field.

All in all, we summarize our findings in the following well-posedness
result.

\begin{thm}The evo-system \eqref{eq:evo-mix} is well-posed if $\rho_{*,\Omega_{0}},C_{\Omega_{0}}$
and $\epsilon_{\Omega_{1}},\mu_{\Omega_{1}}$ are selfadjoint, non-negative,
continuous operators on $L^{2}\left(\Omega_{0},\mathbb{R}^{3}\right)$,
$L^{2}\left(\Omega_{0},\mathrm{sym}\left[\mathbb{R}^{3\times3}\right]\right)$
and on $L^{2}\left(\Omega_{1},\mathbb{R}^{3}\right)$, respectively,
$\sigma_{\Omega_{1}}$ is continuous and linear on $L^{2}\left(\Omega_{1},\mathbb{R}^{3}\right)$
and such that
\[
\rho_{*,\Omega_{0}},C_{\Omega_{0}},\mu_{\Omega_{1}}\geq\eta_{0}>0,
\]
as well as
\[
\nu\epsilon_{\Omega_{1}}+\mathrm{sym}\left(\sigma_{\Omega_{1}}\right)\geq\eta_{0}>0
\]
for some real number $\eta_{0}$ and all sufficiently large $\nu$.
\end{thm}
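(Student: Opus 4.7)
The plan is to reduce the well-posedness of \eqref{eq:evo-mix} to an application of Theorem \ref{th:mother}, with the mother system being the non-symmetric elasticity system on $L^2(\Omega,\mathbb{R}^3)\oplus L^2(\Omega,\mathbb{R}^{3\times 3})$ introduced in Subsection \ref{subsec:Two-Descendants-of}, and the descendant map being $B = I_0^*$ from the previous subsection (applied in the second block-row). Since the work of checking that $\tilde A$ in \eqref{eq:incl}--\eqref{eq:incl-1} arises as a descendant of the mother operator $\bigl(\begin{smallmatrix} 0 & -\dive \\ -\interior\grad & 0 \end{smallmatrix}\bigr)$ has already been carried out, and the density/closed-range hypotheses of Theorem \ref{th:mother} are noted to hold (smooth compactly supported fields in $\Omega_0$ and $\Omega_1$ are a common dense core, and $I_0$ is a composition of a unitary $I$ with canonical embeddings, so its range is closed), the remaining task is purely to verify the positive-definiteness hypothesis \eqref{eq:posdef} for the material law defined by \eqref{eq:M0-mx} and \eqref{eq:M1-mx}.

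For this verification, I would exploit the fact that $M(0)$ and $M'(0)$ are block-diagonal with respect to the splitting $H = L^2(\Omega,\mathbb{R}^3)\oplus\bigl(L^2(\Omega_0,\mathrm{sym}[\mathbb{R}^{3\times3}])\oplus L^2(\Omega_1,\mathbb{R}^3)\bigr)$, and that on the first summand the decomposition $L^2(\Omega,\mathbb{R}^3) = L^2(\Omega_0,\mathbb{R}^3)\oplus L^2(\Omega_1,\mathbb{R}^3)$ makes $\rho_{*,\Omega_0}+\epsilon_{\Omega_1}$ and $\sigma_{\Omega_1}$ act on orthogonal components. Thus $\nu M(0) + \mathrm{sym}(M'(0))$ splits into three independent blocks, and proving the single estimate $\geq \eta_0/\max(1,\|C\|) > 0$ reduces to the three stated hypotheses: $\rho_{*,\Omega_0}\geq \eta_0>0$ on $L^2(\Omega_0,\mathbb{R}^3)$, $\nu\epsilon_{\Omega_1}+\mathrm{sym}(\sigma_{\Omega_1})\geq \eta_0>0$ on $L^2(\Omega_1,\mathbb{R}^3)$ for all $\nu$ large enough, together with $C_{\Omega_0}\geq \eta_0 > 0$ (which yields $C_{\Omega_0}^{-1}\geq \|C_{\Omega_0}\|^{-1}>0$) and $\mu_{\Omega_1}\geq \eta_0>0$.

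With \eqref{eq:posdef} established for $M$, Theorem \ref{thm:well} delivers well-posedness and causality of the mother system in $H_\nu(\mathbb{R},H)$ for $\nu$ large enough. Then Theorem \ref{th:mother} transports this well-posedness (including the corresponding lower bound $\tilde c_*>0$ for the descendant material law) to the evo-system \eqref{eq:evo-mix}, which by the congruence calculation in the previous subsection is precisely the descendant of the mother under the block map $\bigl(\begin{smallmatrix} 1 & 0 \\ 0 & I_0^* \end{smallmatrix}\bigr)$.

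The only step that could genuinely go wrong is the positivity of $\nu M(0) + \mathrm{sym}(M'(0))$ on the $L^2(\Omega,\mathbb{R}^3)$-component, because $\rho_{*,\Omega_0}+\epsilon_{\Omega_1}$ is not uniformly positive on the full space (both summands vanish on the complementary region). The point that makes the argument go through is that $\sigma_{\Omega_1}$ and $\epsilon_{\Omega_1}$ vanish on $\Omega_0$ while $\rho_{*,\Omega_0}$ vanishes on $\Omega_1$, so the sesquilinear form decouples along $L^2(\Omega_0,\mathbb{R}^3)\oplus L^2(\Omega_1,\mathbb{R}^3)$ and on each summand one of the hypotheses of the theorem applies directly. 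No further regularity of the boundary or of the interface $\dot\Omega_0\cap\dot\Omega_1$ is needed, consistent with the broader philosophy of the paper.
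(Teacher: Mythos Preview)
Your positivity computation is the right one and is carried out correctly: on the block-diagonal decomposition
\[
H = \bigl(L^2(\Omega_0,\mathbb{R}^3)\oplus L^2(\Omega_1,\mathbb{R}^3)\bigr)\oplus L^2(\Omega_0,\mathrm{sym}[\mathbb{R}^{3\times3}])\oplus L^2(\Omega_1,\mathbb{R}^3)
\]
the operator $\nu M(0)+\mathrm{sym}(M'(0))$ decouples, and the four hypotheses $\rho_{*,\Omega_0}\geq\eta_0$, $\nu\epsilon_{\Omega_1}+\mathrm{sym}(\sigma_{\Omega_1})\geq\eta_0$, $C_{\Omega_0}^{-1}\geq\|C_{\Omega_0}\|^{-1}$, $\mu_{\Omega_1}\geq\eta_0$ give \eqref{eq:posdef} directly. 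That is exactly how the paper proceeds.

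Where your write-up goes astray is in the logical packaging. You verify \eqref{eq:posdef} for the \emph{descendant} material law \eqref{eq:M0-mx}--\eqref{eq:M1-mx}, but then say this gives well-posedness of the \emph{mother} system, to which you then apply Theorem~\ref{th:mother}. That is circular: Theorem~\ref{th:mother} transports positivity from mother to descendant, not the other way, and you never check \eqref{eq:posdef} for any mother material law. In fact no mother material law is needed here (cf.\ the footnote after \eqref{eq:M1-mx}: the system is considered ``in its own right''). The paper's route is shorter: the mother--descendant mechanism is used \emph{only} to conclude that $A$ in \eqref{eq:incl} is skew-selfadjoint (this is where the closed-range and density checks for $I_0$ enter). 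Once $A$ is skew-selfadjoint and $M(0),M'(0)$ from \eqref{eq:M0-mx}--\eqref{eq:M1-mx} satisfy \eqref{eq:posdef}, Theorem~\ref{thm:well} applies directly to \eqref{eq:evo-mix}. So drop the detour through Theorem~\ref{th:mother} for the material law and invoke Theorem~\ref{thm:well} immediately after your positivity estimate; the argument is then complete and matches the paper.
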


\begin{rem}~

\begin{enumerate}

\item If we formally transcribe the time-harmonic case into its time
dependent form, the transmission conditions of \cite{zbMATH02114868}
are actually 
\begin{equation}
\begin{array}{rl}
T_{\Omega_{0}}n & =n\times\partial_{0}^{-1}H_{\Omega_{1}},\\
n\times\partial_{0}^{-1}v_{\Omega_{0}} & =n\times E_{\Omega_{1}}.
\end{array}\label{eq:transmission-1}
\end{equation}
Although these obviously differ from \eqref{eq:transmission}, we
give preference to our choice above for several reasons. For one,
the energy balance requirement of \cite[formula (5)]{zbMATH02114868},
which reads as
\begin{equation}
v_{\Omega_{0}}^{\top}T_{\Omega_{0}}n=n^{\top}\left(H_{\Omega_{1}}\times E_{\Omega_{1}}\right),\label{eq:energy-balance}
\end{equation}
is satisfied by \eqref{eq:transmission} but not by \eqref{eq:transmission-1}.
With the latter transmission conditions we obtain instead 
\[
v_{\Omega_{0}}^{\top}T_{\Omega_{0}}n=\left(\partial_{0}E_{\Omega_{1}}\right)^{\top}\left(n\times\left(\partial_{0}^{-1}H_{\Omega_{1}}\right)\right)=n^{\top}\left(\left(\partial_{0}^{-1}H_{\Omega_{1}}\right)\times\left(\partial_{0}E_{\Omega_{1}}\right)\right).
\]
The problem seems to be that the difference to \eqref{eq:energy-balance}
becomes unnoticeable in the formal time-harmonic transcription of
\cite{zbMATH02114868}, since there $\partial_{0}$ is formally replaced
by $\ii\omega\sqrt{\epsilon_{0}\mu_{0}}$ and so algebraic cancellation
essentially makes the product rule for differentiation disappear,
erroneously suggesting that the energy balance\footnote{The correct energy balance in the time-harmonic case would actually
involve temporal convolution products.} is satisfied.

\item In the notation above, \eqref{eq:incl}, \eqref{eq:M0-mx},
\eqref{eq:M1-mx}, if $M\left(0\right)$ is already strictly positive
definite, we can construct a fundamental solution as a small perturbation
of the fundamental solution of $\partial_{0}+\sqrt{M\left(0\right)}^{-1}A\sqrt{M\left(0\right)}^{-1}$,
which in turn is obtained from the unitary group 
\[
\left(\exp\left(-t\,\sqrt{M\left(0\right)}^{-1}A\sqrt{M\left(0\right)}^{-1}\right)\right)_{t\in\mathbb{R}}
\]
 by cut-off as 
\[
\left(\chi_{_{\lci0\infty}}\left(t\right)\;\exp\left(-t\,\sqrt{M\left(0\right)}^{-1}A\sqrt{M\left(0\right)}^{-1}\right)\right)_{t\in\mathbb{R}}.
\]
The restriction of the fundamental solution to $\lci0\infty$ yields
the family 
\[
\left(\exp\left(-t\,\sqrt{M\left(0\right)}^{-1}A\sqrt{M\left(0\right)}^{-1}\right)\right)_{t\in\lci0\infty}
\]
commonly referred to as the associated one-parameter semi-group. In
general, however, a fundamental solution may be complicated or impossible
to construct.

\item We note that beyond eddy current type behavior, which is actually
a change of type situation from hyperbolic to parabolic, and beyond
the possibility of including for example piezo-electric effects via
a more complex material law, we may actually allow for completely
general rational material laws as long as condition \eqref{eq:posdef}
is warranted.

\end{enumerate}\end{rem}

\bibliographystyle{plain}

\begin{thebibliography}{10}
\bibitem{zbMATH02114868} F.~{Cakoni} and G.C. {Hsiao}. \newblock
{Mathematical model of the interaction problem between electromagnetic
field and elastic body.} \newblock In {\em {Acoustics, mechanics,
and the related topics of mathematical analysis. Proceedings of the
international conference to celebrate Robert P. Gilbert's 70th birthday,
Frejus, France, June 18–22, 2002}}, pages 48–54. River Edge, NJ:
World Scientific, 2002.

\bibitem{Ref166} K.~O. Friedrichs. \newblock Symmetric hyperbolic
linear differential equations. \newblock {\em Comm. Pure Appl.
Math. 7, 345-392}, 1954.

\bibitem{CPA:CPA3160110306} K.~O. Friedrichs. \newblock Symmetric
positive linear differential equations. \newblock {\em Communications
on Pure and Applied Mathematics}, 11(3):333–418, 1958.

\bibitem{Ref275} T.~Kasuga. \newblock {On Sobolev-Friedrichs generalisation
of derivatives}. \newblock {\em Proc. Jap. Ac. 33, 596-599},
1957.

\bibitem{Leis:Buch:2} R.~Leis. \newblock {\em Initial boundary
value problems in mathematical physics}. \newblock John Wiley \&
Sons Ltd. and B.G. Teubner; Stuttgart, 1986.

\bibitem{maxwell1873treatise} J.~C.~Maxwell. \newblock {\em
A Treatise on Electricity and Magnetism}. \newblock Number Bd. 1
in A Treatise on Electricity and Magnetism. Clarendon Press, 1873.

\bibitem{MMA:MMA3866} A.~J. Mulholland, R.~Picard, S.~Trostorff,
and M.~Waurick. \newblock On well-posedness for some thermo-piezoelectric
coupling models. \newblock {\em Mathematical Methods in the Applied
Sciences}, 39 (15): 4375–4384, 2016.

\bibitem{zbMATH03315043} W.~{Nowacki}. \newblock {Some theorems
of asymmetric thermoelasticity.} \newblock {\em {J. Math. Phys.
Sci.}}, 2:111–122, 1968.

\bibitem{Nowacki1986} W.~Nowacki. \newblock {\em {Theory of
asymmetric elasticity. Transl. from the Polish by H. Zorski.}} \newblock
{Oxford etc.: Pergamon Press; Warszawa: PWN-Polish Scientific Publishers.
VIII, 383 p.}, 1986.

\bibitem{Pi2009-1} R.~Picard. \newblock {A Structural Observation
for Linear Material Laws in Classical Mathematical Physics.} \newblock
{\em {Math. Methods Appl. Sci.}}, 32 (14):1768–1803, 2009.

\bibitem{PIC_2010:1889} R.~Picard. \newblock {An Elementary Hilbert
Space Approach to Evolutionary Partial Differential Equations}. \newblock
{\em Rend. Istit. Mat. Univ. Trieste}, 42 suppl.:185–204, 2010.

\bibitem{PDE_DeGruyter} R.~Picard and D.~F. McGhee. \newblock
{\em Partial Differential Equations: A unified Hilbert Space Approach},
volume~55 of {\em {De Gruyter Expositions in Mathematics}}.
\newblock {De Gruyter. Berlin, New York. 518 p.}, 2011.

\bibitem{zbMATH06250993} R.~{Picard}. \newblock {Mother operators
and their descendants.} \newblock {\em {J. Math. Anal. Appl.}},
403 (1): 54–62, 2013.

\bibitem{ZAMM:ZAMM201300297} R.~Picard, S.~Trostorff, and M.~Waurick.
\newblock On some models for elastic solids with micro-structure.
\newblock {\em ZAMM - Journal of Applied Mathematics and Mechanics
/ Zeitschrift für Angewandte Mathematik und Mechanik}, 95(7):664–689,
2015.

\bibitem{Picard2015} R.~Picard, S.~Trostorff, and M.~Waurick.
\newblock {\em Well-posedness via Monotonicity – an Overview},
in 'Operator Semigroups Meet Complex Analysis, Harmonic Analysis and
Mathematical Physics', Operator Theory – Advances and Applications,
volume 250, 397–452. \newblock Springer International Publishing,
Cham, 2015. 
\end{thebibliography}
\end{document}